\documentclass[amsfonts, amssymb, amsmath,longbibliography,nofootinbib,reqno]{amsart}
\usepackage{graphicx} % Required for inserting images
\usepackage{amsaddr}

\usepackage[dvipsnames]{xcolor}
\usepackage{float}
\usepackage[shortlabels]{enumitem}
\usepackage[numbers,sort&compress]{natbib}
\usepackage{braket}
\usepackage{amsthm}
\usepackage{mathtools}
\usepackage{url}
\usepackage{physics}
\usepackage{graphicx}
\usepackage[left=22mm,right=22mm,top=35mm]{geometry} 
\usepackage[T1]{fontenc}
\usepackage{bm}
\usepackage{tabularx}
\newcommand*{\Ad}{\mathrm{Ad}}

\def\HC{\mathcal{H}}

\def\ad{^{\dagger}}

\newcommand{\fsnull}[1]{}
\newcommand{\old}[1]{}

\usepackage[makeroom]{cancel}
\usepackage[toc,page]{appendix}
\definecolor{C1}{RGB}{52, 89, 149}
\definecolor{C2}{RGB}{251, 77, 61}
\definecolor{C3}{RGB}{3, 206, 164}
\definecolor{C4}{RGB}{202, 21, 81}
\usepackage{hyperref}
\hypersetup{colorlinks=true, linkcolor=C4, citecolor=C4, urlcolor=C4}

\usepackage{tikz}
\tikzset{every picture/.style=remember picture}

\usepackage{pgfplots}

\usepackage[utf8]{inputenc}
\usepackage{graphicx}
\usepackage{xcolor}
\usepackage{amsmath}
\usepackage{amsthm}
\usepackage{bm}
\usepackage{bbm}
\usepackage{comment}
\usepackage{appendix}
\usepackage{mathdots}
\usepackage{lipsum}
\usepackage{verbatim}
\usepackage{nccmath}
\usepackage{amsfonts}
\usepackage{thm-restate}
\usepackage{thmtools}
\usepackage{capt-of}
\usepackage{booktabs}

\newtheorem{remark}{Remark}
\newtheorem{crl}{Corollary}

\usepackage{amssymb}
\usepackage{dsfont}
\newcommand{\HS}{\text{HS}}

\newcommand{\supp}{\text{supp}}

\renewcommand{\geq}{\geqslant}
\renewcommand{\leq}{\leqslant}

\newcommand{\ot}{\otimes}
\newcommand{\ts}{^{\otimes 2}}
\newcommand{\bs}{\textsf{BS}}

\newcommand{\Lm}{\Lambda }

\newcommand{\sg}{\sigma }

\newcommand{\Om}{\Omega }

\DeclareMathOperator*{\expect}{\mathbb{E}}

\newcommand{\mcl}{\mathcal{L}}

\newcommand{\mco}{\mathcal{O}}

\newcommand{\mch}{\mathcal{H}}

\newcommand{\mcp}{\mathcal{P}}

\newcommand{\mce}{\mathcal{E}}

\newcommand{\mbsp}{\mathbb{SP}}

\newcommand{\mbo}{\mathbb{O}}
\newcommand{\mbu}{\mathbb{U}}

\def\be{\begin{equation}}
\def\ee{\end{equation}}
\def\bs{\begin{split}}
\def\e{\end{split}}
\def\ba{\begin{eqnarray}}
\def\bea{\begin{eqnarray}}

\def\tea{\end{eqnarray}}
\def\ea{\end{eqnarray}}
\def\eea{\end{eqnarray}}

\def\tk{^\otimes k}

\def\tk{^{\otimes k}}

\newcommand{\id}{\mathds{1}}

\def\sg{\sigma}

\def\be{\begin{equation}}
\def\te{\end{equation}}
\def\ee{\end{equation}}
\def\ba{\begin{eqnarray}}
\def\bea{\begin{eqnarray}}

\def\tea{\end{eqnarray}}
\def\ea{\end{eqnarray}}
\def\eea{\end{eqnarray}}

\newtheorem{theorem}{Theorem}

\begin{document}

\bibliographystyle{IEEEtran}

\title[Ambient unitaries don't enable shallow group designs]{Ambient unitaries don't enable   shallow group designs}
\author{\vspace{-3mm}M\MakeLowercase{axwell}  W\MakeLowercase{est}\,\textsuperscript{1,2},  M. C\MakeLowercase{erezo}\,\textsuperscript{2,3} \MakeLowercase{and}  M\MakeLowercase{art\'{i}n} L\MakeLowercase{arocca}\,\textsuperscript{1,2}}
\address{\textsuperscript{1}Theoretical Division, Los Alamos National Laboratory, Los Alamos, New Mexico 87545, USA}
\address{\textsuperscript{2}Quantum Science Center, Oak Ridge, TN 37931, USA}
\address{\textsuperscript{3}Information Sciences, Los Alamos National Laboratory, Los Alamos, New Mexico 87545, USA}

\begin{abstract} { 
    Characterising the efficiency with which    designs over various subsets of the unitary group may be constructed is an important goal of quantum information theory. While it is now known that approximate unitary designs can be realised in depth  logarithmic in the system size, it has recently been shown that ensembles of    local nearest-neighbour sublinear-depth one-dimensional circuits over the matchgate, orthogonal, and  symplectic  groups 
    cannot form approximate 2-designs over their parent groups; similarly, sublinear-depth ensembles of Cliffords cannot form a Clifford 4-design. In this note we show that this remarkable   exponential separation is not merely an artefact of restricting to ensembles consisting of unitaries from the subgroups themselves, but rather that   \textit{no} ensemble of local nearest-neighbour sublinear-depth unitaries can realise approximate designs in the aforementioned cases, even when employing ``ambient'' unitaries from beyond the subgroup itself (possibly acting on ancilla qubits). This implies that various natural tomography and benchmarking schemes which involves   sampling from these groups suffer from a   dramatic circuit depth overhead compared to similar protocols which involve sampling from the full   unitary group. We additionally conclude   that, in all of the above cases, the known linear-depth design constructions are up to constant factors optimal. 
}
\end{abstract}

\maketitle

\tableofcontents

\section{Introduction}
Understanding the complexity of  sampling from various random ensembles of unitaries  is a fundamental task  in quantum computation and information~\cite{schuster2024random,cui2025unitary,laracuente2024approximate,west2025no,grevink2025will,brandao2016local,haferkamp2022random,ma2024how,dankert2009exact,harrow2009random,chen2024incompressibility,gross2007evenly,haah2025short}, with applications ranging from practical near-term quantum algorithms to problems in quantum gravity~\cite{hayden2007black,sekino2008fast}. Naturally, the greatest attention has been given to constructing designs over the unitary group itself, where progress has been remarkable. Indeed, it has been shown that ensembles of depth scaling just as $\mco(k\,{\rm polylog}\, (k) \log(n/\varepsilon)) $ can form $\varepsilon$-approximate\footnote{We defer for the moment commentary on the metric by which we quantify approximations}  $k$-designs~\cite{schuster2024random}. When allowing ancilla qubits, long range gates, and assuming a supply of $\mco(nk) $ bits of randomness, this scaling can be further reduced  to $\mco(\log k \log\log nk/\varepsilon)$~\cite{cui2025unitary}. \\

Recently, there has also been interest in considering the construction of designs over strict subgroups of the unitary group~\cite{west2025no,grevink2025will}. In a surprising contrast to the extreme efficiency with which approximate designs can be achieved in the full unitary case, it has been shown~\cite{west2025no,grevink2025will} that sublinear-depth ensembles over the matchgate, orthogonal, and unitary symplectic groups cannot form 2-designs over their parent groups, and that sublinear-depth ensembles of Cliffords cannot form a Clifford 4-design. These results admit a simple explanation: the aforementioned groups possess invariant states in certain tensor products of their defining representations, which may be used as ``probes'' to detect sublinearity\footnote{The situation is marginally more subtle in the Clifford case, as we explore below}. Indeed, suppose for the sake of illustration that a state $\ket \Psi\in\mch\ts$ is invariant under the tensor square action of a group $G\subseteq \mbu(\mch)$ of unitaries on a Hilbert space $\mch$; that is, $U\ts\ket\Psi=\ket\Psi$ for all $U\in G$. Then, for any perturbation $V\in \mbu(\mch)$, we have $U\ts (V\ot\id_\mch)\ket\Psi =   ((UVU\ad)\ot\id_\mch) \ket\Psi $. The appearance of the Heisenberg-evolved $UVU\ad$ may then be used to probe the depth of $U$; the resulting  lightcone arguments lead to $\Om(n)$ lower bounds on the depths required for approximate designs. This is depicted in Figure~\ref{fig:1}. \\

The preceding argument, however, relies crucially on the assumption that we are forming a design by sampling from an ensemble of unitaries which themselves belong to the group under consideration; this is a significant assumption which is not necessarily true in practical design constructions. For example, let $\mce$ be any   $\varepsilon$-approximate Clifford 3-design and  $V$ some arbitrary non-Clifford unitary. Since the Cliffords are a unitary 3-design~\cite{webb2016clifford,zhu2017multiqubit}, the left-invariance of the unitary Haar measure implies that $V\mce$  (i.e., the ensemble $\{VU\,|\,U\sim\mce\}$) is again an $\varepsilon$-approximate Clifford 3-design, despite being supported  outside of the Clifford group. Cosets of this form have arisen   in randomised benchmarking; for example, $T\mathsf{Cl}_n$ (with $T$ the usual $T$-gate and $\mathsf{Cl}_n$ the $n$-qubit Clifford group)  in Ref.~\cite{harper2017estimating}. 
So, for the practical question of whether we can find ensembles of sublinear-depth unitaries which constitute      designs over the aforementioned groups, a considerable loophole remains open.  \\

Here we close this loophole for all four of our considered  cases, which turns out to be possible as a result of a simple observation. Indeed, consider  a ``typical'' $U\sim\mce$, where $\mce$ is an ensemble of shallow unitaries that may or may not have support solely within a target group $G$. There are two possibilities: either $U\tk$ approximately stabilises the $G$-invariant state $\ket\Psi$, or it does not. In the former case, we can effectively run the experiment of Ref.~\cite{west2025no} sketched above (thereby distinguishing $U$ from a typical element sampled from the Haar measure $\mu_G$ on $G$); in the latter case we can plainly distinguish it from the action of any $g\sim\mu_G$, which by assumption does not disturb $\ket\Psi$.  We make this intuition precise in the next section.

\begin{figure}
    \centering
    \includegraphics[width=\linewidth]{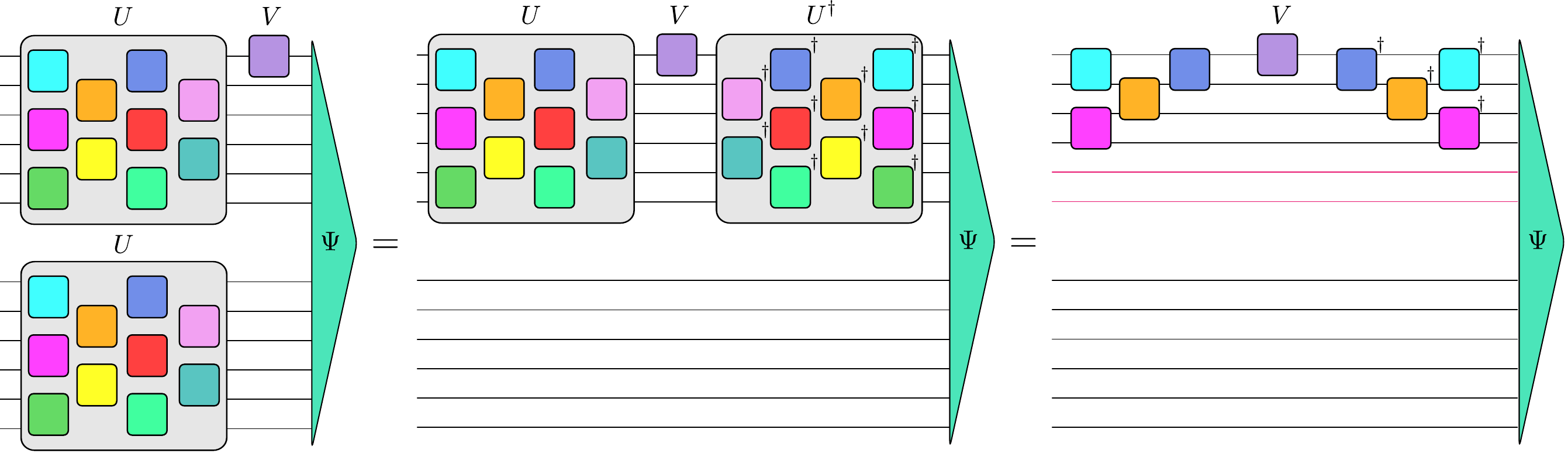}
    \caption{2-designs over a group $G$ which possesses an invariant state $\ket\Psi\in\mch\ts$ formed by averaging over an ensemble of sublinear-depth unitaries belonging to $G$ are straightforwardly disallowed  by a simple lightcone argument~\cite{west2025no}. The invariant state effectively acts as a resource for converting a unitary from the group to its Hermitian conjugate. Indeed, for $U\in G$ we have $U\ts (V\ot\id_\mch)\ket\Psi =   ((UVU\ad)\ot\id_\mch) \ket\Psi $, so that  there are regions of $(V\ot\id_\mch)$ that are ``undisturbed'' if (and with high probability only if) $U$ is of sublinear-depth. Modulo some mild technical assumptions on the entanglement structure of $\ket\Psi$ and the scrambling capacity of $G$, this can be used to rule out such sublinear-depth $G$-designs. } 
    \label{fig:1}
\end{figure}

\section{Results}
The following arguments are slight generalisations of those which may be found in Ref.~\cite{west2025no}; they bear also a considerable resemblance to those of its contemporary, Ref.~\cite{grevink2025will}. 
We begin by establishing some notation. 
Let $\mch$ be a finite-dimensional Hilbert space, $\mcl(\mch):={\rm End}\,\mch$, and let $\mce$ be an ensemble of unitaries on $\mch$ (a probability measure on $\mbu(\mch)$, if you like).  Its $k$\textsuperscript{th} moment channel is~\cite{mele2023introduction}
\begin{equation}
 \Phi_{\mce}^{(k)}(X)
 =
 \expect_{U\sim\mce} \Ad_{U\tk}(X)\,,
\end{equation}
where $\Ad_U(A):=UAU\ad$. 
For a (compact) group $G\subseteq\mbu(\mch)$, then, we write $\Phi_{\mu_G}^{(k)}$ for the
channel obtained from its Haar measure $\mu_G$.  We call $\mce$ (which we do not require to satisfy $\supp(\mce)\subseteq G$) an $\varepsilon$-approximate  $k$-design over $G$ when
\begin{equation}
 \big\|\Phi_{\mce}^{(k)}-\Phi_{\mu_G}^{(k)}\big\|_\diamond\leq\varepsilon\,,
\end{equation}
where we recall that the \textit{diamond distance}~\cite{benenti2010computing}   is defined by 
\begin{equation}
\|\phi-\phi'\|_\diamond = \sup_{\rho} \|((\phi-\phi')\otimes\id_{\rm anc})\rho\|_1, 
\end{equation}
where the supremum is taken over all  quantum states $\rho$ on $\HC\tk\otimes \HC_{\rm anc}$, with $\HC_{\rm anc}$ an arbitrarily large ancilla space. 
The diamond norm is operationalised by its close relationship to the one-shot distinguishability between two channels. Indeed, let $\mu,\nu$ be ensembles of unitaries on $\mch$, $\rho$ a state on $\mch\tk$, and $\{\Pi, \id_\mch\tk-\Pi\}$ a POVM on $\mch\tk$. Let $U$ be with probability 1/2 sampled from $\mu$, and with probability 1/2 sampled from $\nu$.   Then, upon acting with $U\tk$ on $\rho$ and measuring with respect to the POVM,  one has~\cite{west2025no}
\begin{equation}
   \big\|\Phi_\mu^{(k)}-\Phi_\nu^{(k)}\big\|_\diamond\geq 2\,\big\lvert p(\Pi|U\sim\mu)-p(\Pi|U\sim\nu) \big\rvert\,.\label{eq:dd}
\end{equation}
Finally, we say that an operator $O\in\mcl(\mch\tk)$ is \textit{$\mce$-invariant} if $\Phi_{\mce}^{(k)}(O)=O$.
Our primary observation is:

\begin{theorem}
\label{thm:main}
 Let $\mu$ and $\nu$ be ensembles of unitaries,   $Q\in\mcl(\mch\tk)$  a (nonzero) $\mu$-invariant  orthogonal projector, and set $\rho_Q=Q/\Tr(Q)$.  Fix a probe unitary
$A\in\mbu(\mch^{\otimes k})$.  
 Then, for any orthogonal projector $\Pi$ such that $\Ad_{\Ad_{U\tk}(A)}(Q)\leq\Pi$ for all $U\in\supp(\nu)$, we have:
\begin{equation}
 \big\|\Phi_\mu^{(k)}-\Phi_\nu^{(k)}\big\|_\diamond\geq 1-\Tr\left[ \Pi\,\Phi_\mu^{(k)}\left(\Ad_A(\rho_Q) \right)\right]\,.
 \label{eq:bound}
\end{equation}

\end{theorem}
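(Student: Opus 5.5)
The plan is to apply the distinguishability bound \eqref{eq:dd} twice, with two different input states and POVMs, and then average the two resulting lower bounds. This mirrors the dichotomy sketched in the introduction: a typical $U\sim\nu$ either (approximately) preserves $\rho_Q$, so that the probe $A$ detects it, or it does not, in which case measuring $Q$ detects it. Throughout I write $D:=\|\Phi_\mu^{(k)}-\Phi_\nu^{(k)}\|_\diamond$. For $U\in\supp(\nu)$ I write $\sigma_U:=\Ad_{U\tk}(\rho_Q)$, $p_U:=\Tr[Q\sigma_U]$, and $\delta:=1-\expect_{U\sim\nu}p_U\in[0,1]$.

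\emph{Test 1 (does $\nu$ preserve $Q$?).} Take input $\rho_Q$ and POVM $\{Q,\id-Q\}$. Since $Q$ is $\mu$-invariant, $\Phi_\mu^{(k)}(\rho_Q)=\rho_Q$ by linearity. Hence $p(Q\,|\,U\sim\mu)=\Tr[Q\rho_Q]=1$, while $p(Q\,|\,U\sim\nu)=\expect_\nu p_U=1-\delta$. Then \eqref{eq:dd} gives $D\geq 2\delta$.

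\emph{Test 2 (the probe).} Take input $\Ad_A(\rho_Q)$ and POVM $\{\Pi,\id-\Pi\}$. Under $\mu$ the acceptance probability is, by definition, $t:=\Tr[\Pi\,\Phi_\mu^{(k)}(\Ad_A(\rho_Q))]$. Under $\nu$, set $B=\Ad_{U\tk}(A)$ and note the identity
\[
\Ad_{U\tk}\Ad_A(\rho_Q)=U\tk A\rho_QA\ad U^{\dagger\otimes k}=B\,\sigma_U\,B\ad .
\]
The key step is to use the hypothesis $\Pi\geq BQB\ad=\Ad_B(Q)$ together with positivity of $B\sigma_UB\ad$:
\[
\Tr[\Pi\,B\sigma_UB\ad]\geq\Tr[BQB\ad B\sigma_UB\ad]=\Tr[Q\sigma_U]=p_U .
\]
This step loses nothing, and avoids any gentle-measurement or square-root estimate. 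Averaging over $\nu$ gives $p(\Pi\,|\,U\sim\nu)\geq 1-\delta$. Then \eqref{eq:dd} gives $D\geq 2(1-\delta-t)$, which is valid whenever $1-\delta-t\ge 0$ and trivially true otherwise.

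\emph{Combining.} Adding the two bounds gives $2D\geq 2\delta+2(1-\delta-t)=2(1-t)$, so $D\geq 1-t$, which is \eqref{eq:bound}. The only delicate point is the Test 2 inequality. There one must recognise that the $\nu$-output is $\Ad_B$ applied to $\sigma_U$ rather than to $\rho_Q$. The operator inequality $\Pi\geq\Ad_B(Q)$ then converts acceptance by $\Pi$ directly into the overlap $p_U$ of $\sigma_U$ with $Q$, which is exactly the quantity that Test 1 controls.
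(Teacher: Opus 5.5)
Your proposal is correct and follows essentially the same route as the paper. It uses the same two experiments: $\rho_Q$ measured with $\{Q,\id-Q\}$, giving a bound of $2(1-f)$, and $\Ad_A(\rho_Q)$ measured with $\{\Pi,\id-\Pi\}$, where the same operator inequality $\Pi\geq\Ad_{\Ad_{U\tk}(A)}(Q)$ turns $\nu$-acceptance into $\Tr[Q\,\Ad_{U\tk}(\rho_Q)]$, giving $2(f-f')$. Adding the two lower bounds then yields the claim.
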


\begin{proof}
Let $\delta=\|\Phi_\mu^{(k)}-\Phi_\nu^{(k)}\|_\diamond$,  $f=\expect_{U\sim\nu}
 \Tr \left[Q\Ad_{U\tk}(\rho_Q)\right]$ and   $f'=\Tr\big[\Pi\,\Phi_\mu^{(k)}\left(\Ad_A(\rho_Q)\right)\big]$. Recall that we can get lower bounds on the diamond distance $\|\Phi_\mu^{(k)}-\Phi_\nu^{(k)}\|_\diamond$ by considering (thought) experiments that would distinguish between the application of $U\tk$ for $U\sim\mu$ and for $U\sim\nu$ (applied to any initial state of our choosing). Let us begin by    choosing $\rho_Q$. 
  By the $\mu$-invariance of $Q$, the POVM $\{Q,\id_\mch\tk-Q\}$ accepts with probability one if $U\sim\mu$; if  $U\sim\nu$ the acceptance is with probability $f$. Eq.~\eqref{eq:dd} then immediately yields $\delta\geq 2(1-f)$.
  On the other hand, consider applying the channels to the initial state $\Ad_A(\rho_Q) $, and measuring with respect to $\{\Pi,\id_\mch\tk-\Pi\}$. 
By assumption, we have $\Ad_{\Ad_{U\tk}(A)}(Q)\leq\Pi$, so that
\begin{equation}
 \Tr\big(\Pi [\Ad_{U\tk}(\Ad_A(\rho_Q))]\big)\geq\Tr\big([\Ad_{\Ad_{U\tk}(A)}(Q)][\Ad_{U\tk}(\Ad_A(\rho_Q))]\big)
 =\Tr(Q\Ad_{U\tk}(\rho_Q)).
\end{equation}
Thus in this case the $\nu$-acceptance probability is at least $f$, while
the $\mu$-acceptance probability is
$f'$.  Eq.~\eqref{eq:dd} therefore gives $\delta\geq2(f-f')$. Combining our two lower bounds on $\delta$ immediately yields the claim of the theorem.
 
\end{proof}
For example, take $\mu$ to be the Haar measure on some group $G$. The two ``experiments'' (i.e., choices of POVM) of the proof of Theorem~\ref{thm:main} correspond to the previously foreshadowed dichotomy faced by a putative shallow $G$-design $\nu$. Indeed, if (the $k$-fold tensor action of) $U\sim \nu$ fails to approximately stabilise $Q$, then the POVM $\{Q,\id-Q\}$ will witness a large diamond distance from $\mu$; if it does not so fail, then the ``lightcone'' POVM  $\{\Pi,\id-\Pi\}$ will instead serve that purpose.\\

Geometrically, we can think of $\Pi$ as a sort of  local enlargement of the invariant subspace.  Indeed, suppose we have some contiguous bipartition $\mch=\mch_L\otimes\mch_{\overline L}$, and take $A=V\otimes\id_\mch^{\otimes(k-1)}$ for some local unitary $V\in\mbu(\mch)$.
Now, let us introduce
\begin{align*}
 \mathcal K_L(Q)&=\operatorname{span}
 \left\{\left[(O_L\otimes\id_{\overline L})\otimes\id^{\otimes(k-1)}_\mch\right]\ket{\xi}:O_L\in\mcl(\mch_L),\ \ket{\xi}\in\operatorname{ran}(Q)\right\},\\
 \Pi_L(Q)&=\operatorname{proj} \left(\mathcal K_L(Q)\right).
\end{align*}
If every $U\in{\rm supp}(\nu)$ satisfies $\Ad_U(V)\in \mbu(\mch_L)\otimes\id_{\overline L}$ (that is, a ``shallowness assumption'' on $\nu$) then we have that  $\Ad_{U\tk}(A)\operatorname{ran}(Q)
 \subseteq\mathcal K_L(Q)$, so that 
  $\Ad_{\Ad_{U\tk}(A)} (Q) \leq\Pi_L(Q)$, which is exactly the  hypothesis on the projector $\Pi$ of Theorem~\ref{thm:main}. 

\begin{remark}
  The preceding arguments also immediately apply to the case of ancillas. Indeed, suppose that each sampled unitary $W$ acts  on $\mch\otimes\mch_{\rm anc}$, with the
  ancillas initialised in an arbitrary fixed state
  $\rho_{\rm anc}$. Extending the POVMs by the identity on $\mch_{\rm anc}^{\otimes k}$, the argument of Theorem~\ref{thm:main}  goes through unchanged. 
  \end{remark}

All of this becomes quite concrete when we turn to our examples. Indeed, for the matchgate, orthogonal, and symplectic applications, we can take $k=2$ and $Q=\ketbra{\Psi}{\Psi}$, for a maximally entangled, $G$-invariant $\ket\Psi$. 
We write $\mch=\mch_L\otimes\mch_{\overline L}$, with dimensions $d_L$ and
$d_{\overline L}$, and let $d=d_Ld_{\overline L}$. The maximal
entanglement of $\ket\Psi$ means that  the induced vectorisation map $J_\Psi(O)=(O\otimes\id)\ket{\Psi}$
is an isometry when we equip  operator space with  the (normalised)
Hilbert--Schmidt inner product,
$\langle O_1,O_2\rangle_{\HS}=d^{-1}\Tr \big(O_1^\dagger O_2\big)$. These ingredients combine to give us a fairly  canonical ``enlargement'', namely 
\begin{equation}
    \Pi_L(\ketbra{\Psi}{\Psi})=\operatorname{proj}(J_\Psi(
\mcl(\mch_L)\ot \id_{\overline L}))\,;
\end{equation}
the corresponding (Hilbert--Schmidt) projection in operator space is 
\begin{equation}
 \mcp_L(O)= (J_\Psi\ad \Pi_L J_\Psi)(O)=\frac{1}{d_{\overline L}}\Tr_{\overline L}(O)\otimes\id_{\overline L}
 \label{eq:plo}
\end{equation}
(note the disappearance of the dependence on $\Psi$). 
We now specialise Theorem~\ref{thm:main} to the case of a pure state invariant in $\mch\ts$: 
\begin{theorem}\label{thm:thm2}
    Let $G\subseteq\mbu(\mch)$ be compact and possess a $G$-invariant state $Q=\ketbra{\Psi}{\Psi}\in\mcl(\mch\ts)$, with $\ket\Psi$ maximally entangled, and decompose $\mch=\mch_L\otimes\mch_{\overline L}$.  Take
$V\in\mbu(\mch)$, and let $\mce$ be any probability measure on
$\mbu(\mch)$ satisfying $\Ad_U(V)\in\mbu(\mch_L)\ot\id_{\overline{L}}$  for every
$U\in\supp(\mce)$. Then,  defining
\begin{equation}
 q_G(V,L)=\expect_{W\sim\mu_G}\frac{1}{d}
 \Tr \left[ \Ad_W(V)\ad \mcp_L  (\Ad_W(V))\right]\,,
 \label{eq:q-group}
\end{equation}
we have
\begin{equation}
 \big\|\Phi_\mce^{(2)}-\Phi_{\mu_G}^{(2)}\big\|_\diamond\geq 1-q_G(V,L)\,.
 \label{eq:thm2}
\end{equation}
\end{theorem}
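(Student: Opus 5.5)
We apply Theorem~\ref{thm:main} with $k=2$, $\mu=\mu_G$, $\nu=\mce$, $Q=\ketbra{\Psi}{\Psi}$ (so $\rho_Q=Q$), probe $A=V\otimes\id_\mch$, and $\Pi=\Pi_L(\ketbra{\Psi}{\Psi})$. The hypotheses are checked as follows. The projector $Q$ is $\mu_G$-invariant, since $W\ts\ket\Psi=\ket\Psi$ for every $W\in G$ gives $\Phi^{(2)}_{\mu_G}(Q)=Q$. For the lightcone condition, take $U\in\supp(\mce)$. Then $\Ad_{U\ts}(A)=\Ad_U(V)\otimes\id_\mch$ with $\Ad_U(V)=O_L\otimes\id_{\overline L}$. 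Hence $\Ad_{U\ts}(A)\ket\Psi\in\mathcal K_L(Q)$, and since $\Ad_{U\ts}(A)$ is unitary, $\Ad_{\Ad_{U\ts}(A)}(Q)$ is the rank-one projector onto a unit vector of $\mathcal K_L(Q)$, which is $\le \Pi_L$. This is exactly the observation made after Theorem~\ref{thm:main}. Theorem~\ref{thm:main} therefore gives
\[
\big\|\Phi_\mce^{(2)}-\Phi_{\mu_G}^{(2)}\big\|_\diamond\geq 1-\Tr\big[\Pi_L\,\Phi_{\mu_G}^{(2)}\big(\Ad_{V\otimes\id}(\ketbra{\Psi}{\Psi})\big)\big],
\]
and it remains to show that the trace equals $q_G(V,L)$.

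To compute the trace, expand the moment channel:
\[
\Phi_{\mu_G}^{(2)}\big(\Ad_{V\otimes\id}(\ketbra\Psi\Psi)\big)=\expect_{W\sim\mu_G}\ketbra{\phi_W}{\phi_W},\qquad \ket{\phi_W}:=W\ts(V\otimes\id)\ket\Psi .
\]
By $G$-invariance of $\ket\Psi$,
\[
\ket{\phi_W}=(WVW\ad\otimes\id)W\ts\ket\Psi=(\Ad_W(V)\otimes\id)\ket\Psi=J_\Psi(\Ad_W(V)).
\]
The trace then becomes $\expect_W\langle\phi_W|\Pi_L|\phi_W\rangle=\expect_W\langle J_\Psi(O_W),\Pi_L J_\Psi(O_W)\rangle$ with $O_W=\Ad_W(V)$. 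Because $J_\Psi$ is an isometry for $\langle\cdot,\cdot\rangle_{\HS}$ and $\Pi_L$ projects onto $J_\Psi(\mcl(\mch_L)\otimes\id_{\overline L})$, this equals $\langle O_W,\mcp_L(O_W)\rangle_{\HS}=d^{-1}\Tr[O_W\ad\mcp_L(O_W)]$ by \eqref{eq:plo}. Averaging over $W$ gives $q_G(V,L)$, which proves \eqref{eq:thm2}.

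The main obstacle is making the identity $J_\Psi\ad\Pi_LJ_\Psi=\mcp_L$ rigorous, since the whole reduction to $q_G$ rests on it. Two points need care. First, maximal entanglement, in the normalisation $\langle\Psi|(O_1\ad O_2\otimes\id)|\Psi\rangle=d^{-1}\Tr(O_1\ad O_2)$, makes $J_\Psi$ a unitary from $(\mcl(\mch),\langle\cdot,\cdot\rangle_{\HS})$ onto $\mch\ts$. Second, conjugating an orthogonal projector by a unitary gives the orthogonal projector onto the preimage subspace. Combining these, $J_\Psi\ad\Pi_LJ_\Psi$ is the HS-orthogonal projection onto $\mcl(\mch_L)\otimes\id_{\overline L}$, and this projection is the normalised partial trace $d_{\overline L}^{-1}\Tr_{\overline L}(\cdot)\otimes\id_{\overline L}$. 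Once this is in place, the rest is bookkeeping.
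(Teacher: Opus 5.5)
Your proposal is correct and follows essentially the same route as the paper. You apply Theorem~\ref{thm:main} with $Q=\ketbra{\Psi}{\Psi}$, $A=V\otimes\id$ and $\Pi=\Pi_L$, use $G$-invariance to get $W\ts(V\otimes\id)\ket\Psi=J_\Psi(\Ad_W(V))$, and reduce the trace to $q_G(V,L)$ via the isometry $J_\Psi$ and $J_\Psi\ad\Pi_LJ_\Psi=\mcp_L$; your justification of that last identity also fills in a step the paper only asserts in Eq.~\eqref{eq:plo}.
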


\begin{proof}
We apply Theorem~\ref{thm:main} with
$Q=\ketbra{\Psi}{\Psi}$, $A=V\otimes\id$, $\Pi=\Pi_L(\ketbra{\Psi}{\Psi})$, $\mu=\mu_G$, and
$\nu=\mce$ (which by the previous discussion satisfy the various hypotheses of Theorem~\ref{thm:main}). This yields   
\begin{align}
\big\|\Phi_{\mu_G}^{(2)}-\Phi_\nu^{(2)}\big\|_\diamond &\geq 1-\Tr \left[\Pi_L\Phi_{\mu_G}^{(2)}\left((V\ot\id_\mch)\ketbra{\Psi}{\Psi}(V\ad\ot\id_\mch) \right)\right]\\
&= 1-\expect_{W\sim\mu_G}\Tr \left[\Pi_L  (WV\ot W)\ketbra{\Psi}{\Psi}(V\ad W\ad \ot W\ad) \right]\\
&= 1-\expect_{W\sim\mu_G}\Tr \left[\Pi_L  (\Ad_W(V) \ot\id_\mch)\ketbra{\Psi}{\Psi}(\Ad_W(V\ad)  \ot\id_\mch) \right]\\
&= 1-\expect_{W\sim\mu_G}\langle J_\Psi(\Ad_W(V)), \Pi_L  J_\Psi(\Ad_W(V))\rangle\,.
\end{align}
Now, for any unitary $U\in\mbu(\mch)$, we have 
\begin{equation}
    \langle J_\Psi(U), \Pi_L  J_\Psi(U)\rangle= \langle J_\Psi(U), J_\Psi J_\Psi\ad\Pi_L  J_\Psi(U)\rangle= \langle J_\Psi(U), J_\Psi \mcp_L  (U)\rangle\,= \frac{1}{d}\Tr(U\ad \mcp_L  (U)),
\end{equation}
where we have used that $J_\Psi$ is an isometry (with respect to the (normalised) Hilbert–Schmidt inner product). We therefore conclude that
\begin{align}
\big\|\Phi_\mce^{(2)}-\Phi_{\mu_G}^{(2)}\big\|_\diamond &\geq 1-\frac{1}{d}\expect_{W\sim\mu_G}\Tr(\Ad_W(V)\ad \mcp_L  (\Ad_W(V))) = 1-q_G(V,L)\,,
\end{align}
as desired.
\end{proof}

We note that if  $\supp(\mce)\subseteq G$ then it is possible to strengthen the lower bound  to
$2(1-q_G(V,L))$~\cite{west2025no}; the more general bound here comes thus at the cost of a factor of two. In the case of a qubit system, one can (by expanding $O$ in the Pauli basis) reformulate Eq.~\eqref{eq:thm2} to read
\begin{equation}
    \big\|\Phi_\mce^{(2)}-\Phi_{\mu_G}^{(2)}\big\|_\diamond \geq 1-\expect_{W\sim\mu_G}F_{\overline L}(\Ad_W(V))\,\label{eq:otocbound}\,,
\end{equation}
where
\begin{equation}
  F_{\overline L}(O)=\frac{1}{d_{\overline L}^{\,2}}\sum_{P\in\mathsf P_{\overline L}}\frac{1}{d}\Tr\!\left[O^\dagger(\id_L\otimes P)O(\id_L\otimes P)\right]\label{eq:otoc}
\end{equation}
is the infinite-temperature out-of-time-order correlator (OTOC)~\cite{roberts2017chaos,nahum2018operator} averaged over the Pauli probes $P$ with support on $\mch_{\overline L}$, with  $\mathsf P_{\overline L}$  the Pauli group (modulo phases) on $\mch_{\overline L}$. Up to the factor of two lost at this level of generality, Eq.~\eqref{eq:otocbound} is the result of Theorem 1 of Ref.~\cite{west2025no}.

\section{Applications}
Armed with Eq.~\eqref{eq:otocbound}, our applications to the matchgate, orthogonal and symplectic groups are essentially immediate, as the relevant OTOCs have already been calculated in Ref.~\cite{west2025no}; we deal with these cases fairly quickly. The case of the Clifford group requires the full Theorem~\ref{thm:main}, and thus marginally more attention.

\subsection{The matchgate group}
Let us write 
\begin{alignat}{3}\label{eq:majos}
c_1&=XIII\cdots I,\qquad&c_{2}&=YIII\cdots I\nonumber\\
c_3&=ZXII\cdots I,&c_{4}&=ZYII\cdots I\\
&\:\:\vdots&&\:\:\vdots\nonumber\\
c_{2n-1}&=ZZ\cdots ZX,&c_{2n}&=ZZ\cdots ZY  \nonumber
\end{alignat} 
for Jordan--Wigner Majorana operators corresponding to $n$ qubits on a line; the matchgate group~\cite{knill2001fermionic}  $\mathsf{MG}\cong{\rm Spin}(2n)$ is then generated by exponentials of quadratic  Hamiltonians in the Majoranas.  It possesses a maximally entangled invariant state of
the form $ \ket{\Psi_{\mathsf{MG}}}=(\id\otimes\Omega)\ket{\Phi}$, where $\ket{\Phi}\in\mch\ts$ is the Bell state, and $\Om$ may be taken to be either $XYXYX\ldots $ or $YXYXY\ldots $~\cite{west2025no}. 
Assuming for simplicity that $n$ is even (the odd case introduces no particularly interesting new features) consider the perturbation $V=X_{n/2}$.  If $U$ is
implemented by a depth-$L$ circuit of arbitrary nearest-neighbour
two-qubit gates and $L\leq n/2-1$, then the (Heisenberg) light cone of
$X_{n/2}$, i.e. an $X$ operator on the $(n/2)$\textsuperscript{th}-qubit, cannot reach the last qubit.  Thus $\Ad_U(V)\in \mbu(\mch_L)\otimes\id_{\overline L}$
holds with $\mch_L$ the first $n-1$ qubits and
$\mch_{\overline L}$ the last qubit (regardless of whether or not $U$ is a
matchgate).
It remains to evaluate the average OTOC of Eq.~\eqref{eq:otoc}. The benefit of choosing the perturbation to be 
  $X_{n/2}$ is that it lies in the  module (under the adjoint action) consisting of products of
$n-1$ distinct Majoranas, of size $\mco(\exp (n))$, which leads to a favourable OTOC scaling. Indeed one finds~\cite{west2025no} 
\begin{equation}
 \expect_{W\sim\mu_{\mathsf{MG}}}F_{\overline{L}}(\Ad_W(X_{n/2}))=\frac{\binom{2n-2}{n-1}}{\binom{2n}{n-1}}=\frac{n+1}{2(2n-1)}\,.
\end{equation}
 We therefore immediately obtain:
\begin{crl}
\label{cor:matchgate}
Let $\mce_L$ be any ensemble of $n$-qubit unitaries implementable by depth-$L$ nearest-neighbour circuits on a
one-dimensional line.  Then,  for even $n$ and $L\leq n/2-1$,
\begin{align}
 \big\|\Phi_{\mce_L}^{(2)}-\Phi_{\mu_{\mathsf{MG}}}^{(2)}\big\|_\diamond
 &\geq1-\frac{n+1}{2(2n-1)}\sim\frac34.
\end{align}
\end{crl}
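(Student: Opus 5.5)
The plan is to apply Theorem~\ref{thm:thm2} in its Pauli-OTOC form, Eq.~\eqref{eq:otocbound}, with $G=\mathsf{MG}$, $Q=\ketbra{\Psi_{\mathsf{MG}}}{\Psi_{\mathsf{MG}}}$, $V=X_{n/2}$, and the bipartition in which $\mch_L$ is the first $n-1$ qubits and $\mch_{\overline L}$ the last qubit. This requires three checks, in order.

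\textbf{The invariant state.} We check that $\ket{\Psi_{\mathsf{MG}}}=(\id\otimes\Omega)\ket\Phi$ is a legitimate $Q$, i.e.\ $\mathsf{MG}$-invariant under the tensor-square action and maximally entangled.
\begin{itemize}
\item Maximal entanglement holds because $\Omega$ is a local unitary applied to a Bell state.
\item Invariance, as in Ref.~\cite{west2025no}, reduces to $\Omega\,\overline{W}\,\Omega^\dagger=W$ for all $W\in\mathsf{MG}$.
\item Equivalently, conjugation by $\Omega$ composed with complex conjugation fixes every quadratic Majorana product $c_ic_j$. This can be checked directly on the generators $c_ic_{i+1}$, using the Jordan--Wigner form \eqref{eq:majos} and the fact that $Y$ is the only imaginary Pauli.
\end{itemize}
This is quoted from Ref.~\cite{west2025no}, so here we only cite it.

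\textbf{The shallowness hypothesis.} We verify $\Ad_U(V)\in\mbu(\mch_L)\otimes\id_{\overline L}$ for every $U\in\supp(\mce_L)$.
\begin{itemize}
\item Write $U=U_L\cdots U_1$ as a product of layers of nearest-neighbour two-qubit gates.
\item The Heisenberg-evolved $U X_{n/2}U^\dagger$ is computed by conjugating layer by layer.
\item Each layer enlarges the support by at most one site on each side.
\item After $L$ layers the support lies within qubits $n/2-L,\dots,n/2+L$.
\end{itemize}
When $L\le n/2-1$ this support excludes qubit $n$. This argument is gate-agnostic, and that is precisely why ambient (non-matchgate) gates are allowed.

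\textbf{Conclusion.} Theorem~\ref{thm:thm2}, through Eq.~\eqref{eq:otocbound}, then gives
$\|\Phi^{(2)}_{\mce_L}-\Phi^{(2)}_{\mu_{\mathsf{MG}}}\|_\diamond\ge 1-\expect_{W}F_{\overline L}(\Ad_W(X_{n/2}))$, and we insert the quoted value $\tfrac{n+1}{2(2n-1)}$.

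The arithmetic step is $\binom{2n-2}{n-1}/\binom{2n}{n-1}=\frac{(n+1)n}{2n(2n-1)}$, which tends to $1/4$. This gives the limit $3/4$.

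The only substantive step is the OTOC value, which we take from Ref.~\cite{west2025no}. To justify it:
\begin{itemize}
\item $X_{n/2}$ is, up to phase, a product of $n-1$ distinct Majoranas.
\item Haar-averaging over $\mathsf{MG}$ spreads it uniformly over the $\binom{2n}{n-1}$ such monomials, since the degree-$(n-1)$ module is irreducible under $\mathrm{Spin}(2n)$ for the relevant purposes.
\item For a monomial $M$, the OTOC equals the average over the four Paulis $P$ on the last qubit of the sign relating $PMP$ to $M$.
\item That average is $1$ if $M$ avoids $c_{2n-1}$ and $c_{2n}$, and $0$ otherwise.
\item So the OTOC is the fraction of monomials avoiding those two Majoranas, which is exactly the stated ratio.
\end{itemize}
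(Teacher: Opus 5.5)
Your proposal is correct and follows the paper's proof essentially verbatim: it applies Theorem~\ref{thm:thm2} in the OTOC form of Eq.~\eqref{eq:otocbound} with $V=X_{n/2}$ and $\mch_{\overline L}$ the last qubit, uses the gate-agnostic lightcone bound for $L\leq n/2-1$, and inserts the averaged OTOC $\binom{2n-2}{n-1}/\binom{2n}{n-1}=\frac{n+1}{2(2n-1)}$ quoted from Ref.~\cite{west2025no}. Your extra sketches of the invariance of $\ket{\Psi_{\mathsf{MG}}}$ and of the OTOC value go beyond what the paper spells out and are sound; the OTOC sketch implicitly relies on two facts, namely that cross terms between distinct Majorana monomials vanish in $F_{\overline L}$, and that Schur orthogonality for the irreducible degree-$(n-1)$ module makes the expected squared weights uniform.
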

In the case of odd $n$ we can take the  perturbation to be 
  $X_{(n-1)/2}$ and obtain a similar result.
So, for every fixed $\varepsilon<3/4$, we conclude that for sufficiently large $n$ any $\varepsilon$-approximate matchgate 2-design 
requires depth at least $n/2$ in the one-dimensional nearest-neighbour circuit model. The scaling is tight up to a (small!) constant, as  the full matchgate Haar measure
can be sampled from exactly using depth $3n$~\cite{braccia2025optimal}.

\subsection{The orthogonal and symplectic groups}

Let $d=2^n$ and $\mch=(\mathbb C^2)^{\otimes n}$.  We use the defining
representations
\begin{align}
 \mbo (d)
 &=
 \left\{W\in\mbu(d):W^{\mathsf T}W=\id\right\}\,,\\
 \mbsp(d/2)
 &=
 \left\{W\in\mbu(d):W\Omega_{\mbsp} W^{\mathsf T}=\Omega_{\mbsp}\right\}\,,
\end{align}
where $\Omega_{\mbsp}=iY\ot\id_{2^{n-1}}$ is the canonical symplectic form. 
The  Bell state $\ket{\Phi_d}$ is invariant under
$O^{\otimes2}$ for $O\in \mbo(d)$, and $\ket{\Psi_\mbsp}=(\id\otimes\Omega_\mbsp)\ket{\Phi_d}$ is invariant under the diagonal action of the symplectic group.  
Both of these states are maximally entangled, so we can apply Theorem~\ref{thm:thm2}.
In both cases, we take (as in Ref.~\cite{west2025no}) the   perturbing unitary to be $V=Z_1$, so that  a depth-$L$
nearest-neighbour circuit 
can spread $V$ over at most the first $L+1$ qubits. Taking $\mch_L$ to be the corresponding Hilbert space, the Weingarten calculus on the orthogonal and symplectic groups~\cite{collins2006integration} allows for the OTOCs of Eq.~\eqref{eq:otoc} to be readily evaluated, resulting in
\begin{align}
 \expect_{W\sim\mu_{\mbo}}F_{\overline{L}}(\Ad_W(Z_{1}))&=\frac{d_L^2+d_L-2}{(d+2)(d-1)}\\
 \expect_{W\sim\mu_{\mbsp}}F_{\overline{L}}(\Ad_W(Z_{1}))&=\frac{d_L(d_L+1)}{d(d+1)}\,. 
\end{align}
We therefore find:

\begin{crl}
\label{cor:orthogonal-symplectic}
Let $\mce_L$ be any ensemble of $n$-qubit unitaries implementable by depth-$L$ nearest-neighbour circuits on a
one-dimensional line.  For $n\geq2$ and
$0\leq L\leq n-2$,
\begin{align}
 \left\|\Phi_{\mce_{L}}^{(2)}-\Phi_{\mu_{\mbo(d)}}^{(2)}\right\|_\diamond&\geq\frac{d(3d+2)}{4(d+2)(d-1)}\sim \frac34,\\
 \left\|\Phi_{\mce_{L}}^{(2)}-\Phi_{\mu_{\mbsp(d/2)}}^{(2)}\right\|_\diamond&\geq\frac{3d+2}{4(d+1)}\sim\frac34.
\end{align}
\end{crl}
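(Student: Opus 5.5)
The plan is to specialise Theorem~\ref{thm:thm2}, in its OTOC form Eq.~\eqref{eq:otocbound}, to a bipartition that works uniformly for every $L\leq n-2$. Take $\mch_L$ to be the first $n-1$ qubits and $\mch_{\overline L}$ the last qubit, so $d_L=d/2$ and $d_{\overline L}=2$. I will not use the $L$-dependent choice of the first $L+1$ qubits. This choice is legitimate for all $L$ in range, and it is also the worst case, since both OTOC expressions are increasing in $d_L$.

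\textbf{Step 1 (shallowness hypothesis).} Let $U$ be implemented by a depth-$L$ nearest-neighbour circuit on a line; $U$ need not lie in $\mbo(d)$ or $\mbsp(d/2)$. Each layer enlarges the Heisenberg support of an operator by at most one site on each side. Hence $\Ad_U(Z_1)$ is supported on qubits $1,\dots,L+1\subseteq\{1,\dots,n-1\}$ because $L\leq n-2$. So $\Ad_U(Z_1)\in\mbu(\mch_L)\otimes\id_{\overline L}$ for every $U\in\supp(\mce_L)$, which is the hypothesis of Theorem~\ref{thm:thm2}.

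\textbf{Step 2 (invariant states).} The Bell state $\ket{\Phi_d}$ and $\ket{\Psi_\mbsp}=(\id\otimes\Omega_\mbsp)\ket{\Phi_d}$ are maximally entangled and invariant under the diagonal actions of $\mbo(d)$ and $\mbsp(d/2)$ respectively. Theorem~\ref{thm:thm2} and Eq.~\eqref{eq:otocbound} then apply with $V=Z_1$, giving
\[
\big\|\Phi_{\mce_L}^{(2)}-\Phi_{\mu_G}^{(2)}\big\|_\diamond\geq 1-\expect_{W\sim\mu_G}F_{\overline L}(\Ad_W(Z_1)).
\]
The ensemble is not assumed to lie in $G$; this is exactly the point of Theorem~\ref{thm:main}.

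\textbf{Step 3 (insert the Weingarten values with $d_L=d/2$).}
For the orthogonal group,
\[
d_L^2+d_L-2=\tfrac14(d^2+2d-8),
\]
so
\[
1-\frac{d^2+2d-8}{4(d+2)(d-1)}=\frac{4d^2+4d-8-d^2-2d+8}{4(d+2)(d-1)}=\frac{d(3d+2)}{4(d+2)(d-1)}.
\]
For the symplectic group, $\frac{(d/2)(d/2+1)}{d(d+1)}=\frac{d+2}{4(d+1)}$, so the bound is $1-\frac{d+2}{4(d+1)}=\frac{3d+2}{4(d+1)}$. Both tend to $3/4$ as $d\to\infty$.

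The main obstacle is the OTOC averages themselves, which require orthogonal and symplectic Weingarten calculus. Since the paper supplies these (from Ref.~\cite{west2025no}), the remaining work is the support bookkeeping in Step 1 and the algebra in Step 3. The one point to double-check is that enlarging $\mch_L$ to $n-1$ qubits is allowed even when the light cone is smaller. It is, because the hypothesis of Theorem~\ref{thm:thm2} only requires $\Ad_U(V)$ to act trivially on $\mch_{\overline L}$.
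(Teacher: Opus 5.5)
Your proposal is correct and follows the paper's proof. You apply Theorem~\ref{thm:thm2} in its OTOC form with $V=Z_1$, invoke the light-cone bound on the support of $\Ad_U(Z_1)$ for arbitrary (not necessarily in-group) nearest-neighbour circuits, and substitute the orthogonal and symplectic Weingarten values. The only difference is cosmetic: you fix the cut at the first $n-1$ qubits for every $L\leq n-2$, whereas the paper cuts at the light-cone boundary (the first $L+1$ qubits). The paper's worst case $L=n-2$ reproduces exactly your $d_L=d/2$ and the stated constants, and its choice also yields stronger, $L$-dependent bounds for smaller $L$.
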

So, for any fixed $\varepsilon<3/4$ (and  sufficiently large $n$) orthogonal and symplectic designs require depth at least $n-1$.

\subsection{The Clifford group}
\label{sec:clifford}
The famously narrow failure of the $n$-qubit Clifford group $\mathsf{Cl}_n$ to form a unitary 4-design~\cite{zhu2016clifford,bittel2025complete} arises as a result of the element
\begin{equation}
 Q=\frac{1}{d^2}
 \sum_{P\in\mathsf{P}_n}
 P^{\otimes4}
 \label{eq:cliffordq}
\end{equation}
of the 4\textsuperscript{th}-order Clifford commutant. Indeed,  Clifford conjugation permutes the Pauli group up to signs, which disappear in the fourth tensor power. One can readily verify that $Q$ is an orthogonal projector of trace $d^2$ (recall $d=2^n$); we will use it to apply Theorem~\ref{thm:main}. We find:

\begin{crl}
Let $\mce_L$ be any probability ensemble of $n$-qubit unitaries implementable by  depth-$L$ nearest-neighbour circuits on a one-dimensional line.   For $0\leq L\leq n-1$,
\begin{equation}\label{eq:cliffbound}
 \big\|\Phi_{\mce_L}^{(4)}-\Phi_{\mu_{\mathsf{Cl}_n}}^{(4)}\big\|_\diamond\geq1-\frac{4^{L+1}-1}{4^n-1}\,.
\end{equation}
In particular, every ensemble of depth $L\leq n-2$ has distance at
least $3/4$ from the Clifford fourth moment. 
\end{crl}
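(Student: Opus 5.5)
We apply Theorem~\ref{thm:main} with $k=4$, $\mu=\mu_{\mathsf{Cl}_n}$, $\nu=\mce_L$, $Q$ as in Eq.~\eqref{eq:cliffordq}, probe $A=Z_1\otimes\id^{\otimes 3}$, and $\Pi=\Pi_{L'}(Q)$. Here $\mch_{L'}$ is the Hilbert space of the first $L+1$ qubits. By the lightcone of a depth-$L$ nearest-neighbour circuit, $\Ad_U(Z_1)\in\mbu(\mch_{L'})\otimes\id_{\overline{L'}}$ for every $U\in\supp(\mce_L)$, whether or not $U$ is Clifford. The discussion following Theorem~\ref{thm:main} then shows that $\Ad_{\Ad_{U^{\otimes4}}(A)}(Q)\leq\Pi_{L'}(Q)$. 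The $\mu$-invariance of $Q$ is the observation that Clifford conjugation permutes Paulis up to signs, and these signs cancel in $P^{\otimes4}$. It remains to compute the right-hand side of Eq.~\eqref{eq:bound}.

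The key structural step is a sector decomposition of $\mch^{\otimes4}$. For each Pauli $P$ define
\begin{equation*}
Q_P=(P\otimes\id^{\otimes3})\,Q\,(P\otimes\id^{\otimes3})=\frac{1}{d^2}\sum_{R\in\mathsf P_n}\chi(P,R)\,R^{\otimes4},
\end{equation*}
where $\chi(P,R)=\pm1$ is the commutation sign. Each $Q_P$ is a projector of rank $d^2$, and by character orthogonality $\sum_P\chi(P,R)=d^2\delta_{R,\id}$. Hence
\begin{equation*}
\sum_P Q_P=\id,
\end{equation*}
so the $d^2$ ranges $\operatorname{ran}(Q_P)$ are mutually orthogonal and tile $\mch^{\otimes4}$. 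Expanding any $O_{L'}\in\mcl(\mch_{L'})$ in Paulis $P_{L'}\otimes\id_{\overline{L'}}$ shows that $\mathcal K_{L'}(Q)$ is exactly the direct sum of the sectors labelled by Paulis supported on the first $L+1$ qubits. Therefore
\begin{equation*}
\Pi_{L'}(Q)=\sum_{\supp P\subseteq L'}Q_P .
\end{equation*}

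Next we evaluate $\Phi_\mu^{(4)}(\Ad_A(\rho_Q))$. Since $W^{\otimes4}$ commutes with $Q$, this equals $\expect_W d^{-2}Q_{\Ad_W(Z_1)}$, ignoring the irrelevant sign of the Pauli $\Ad_W(Z_1)$. The Clifford group acts transitively on nonidentity Paulis, and Haar invariance then makes $\Ad_W(Z_1)$ uniformly distributed over the $4^n-1$ nonidentity Paulis. By orthogonality of the sectors, $\Tr[\Pi_{L'}\,Q_P]/d^2$ equals $1$ if $P$ is supported on $L'$ and $0$ otherwise. Averaging gives
\begin{equation*}
\Tr\!\left[\Pi\,\Phi_\mu^{(4)}(\Ad_A(\rho_Q))\right]=\frac{4^{L+1}-1}{4^n-1},
\end{equation*}
and Eq.~\eqref{eq:cliffbound} follows. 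For $L\leq n-2$ we have $(4^{n-1}-1)/(4^n-1)\leq 1/4$, which gives the stated $3/4$.

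The step I expect to need the most care is the identification $\Pi_{L'}(Q)=\sum_{\supp P\subseteq L'}Q_P$. The inclusion $\subseteq$ follows from the Pauli expansion. The reverse inclusion requires that $P_{L'}\otimes\id^{\otimes3}$ maps $\operatorname{ran}(Q)$ onto $\operatorname{ran}(Q_{P_{L'}})$, which holds because it is a unitary conjugating $Q$ to $Q_{P_{L'}}$. In fact only the upper bound $\Pi\leq\sum Q_P$ is needed for the inequality.
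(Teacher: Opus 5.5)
Your proof is correct and follows the paper's route: the same application of Theorem~\ref{thm:main} with the same $Q$, probe $A=Z_1\otimes\id^{\otimes3}$, and region of the first $L+1$ qubits. Your projector $\sum_{\supp P\subseteq L'}Q_P$ is in fact identical to the paper's $\id_{\mch_L^{\otimes4}}\otimes Q_{\overline L}$, as one sees by applying the character identity only over Paulis supported on $L'$. The difference is presentational: you get the zero acceptance probability for Paulis leaving $L'$ from orthogonality of the global sectors $Q_P$, whereas the paper uses the factorisation $Q=Q_L\otimes Q_{\overline L}$ together with an anticommuting Pauli $S_{\overline L}$.
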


\begin{proof}
Take $V=Z_1$, i.e. the $Z$ operator on the first of $n$ qubits,   $A=Z_1\otimes\id_\mch^{\otimes3}$, and the decomposition $\mch=\mch_L\otimes\mch_{\overline L}$, with $\mch_L$ the Hilbert space of the first $L+1$ qubits.  Notice that (after regrouping replicas by spatial subsystem) Eq.~\eqref{eq:cliffordq} factorises as $Q=Q_L\otimes Q_{\overline L}$.
We consider the POVM $\{\Pi_L, \id_{\mch^{\ot4}}-\Pi_L\}$, with $\Pi_L=\id_{\mch_L^{\otimes4}}\otimes Q_{\overline L}$. 
For any $U\in\supp(\mce_L)$, write
$\Ad_U(V)=(\Ad_U(V))\rvert_{\mch_L}\otimes\id_{\overline L}$; up to some rearranging of tensor factors one then manifestly has 
\begin{equation}
\Ad_{\Ad_{U^{\ot4}}(A)}(Q)=\left[((\Ad_U(V))\rvert_{\mch_L}\otimes\id^{\otimes3}_\mch)Q_L(((\Ad_U(V))\rvert_{\mch_L})^\dagger\otimes\id^{\otimes3}_\mch)\right]\otimes Q_{\overline L}\leq \Pi_L\,,
\end{equation}
so that Theorem~\ref{thm:main} applies.

It remains to calculate the acceptance probability $\Tr\left[ \Pi_L\,\Phi_{\mu_{\mathsf{Cl}_n}}^{(4)}\left(\Ad_A(\rho_Q) \right)\right]$, with $\rho_Q=Q/d^2$. Unravelling some definitions, we have
\begin{align}
    \Phi_{\mu_{\mathsf{Cl}_n}}^{(4)}\left(\Ad_A(\rho_Q) \right)&= \frac{1}{d^2}\expect_{C\sim\mathsf{Cl}_n}C^{\ot 4} AQA\ad (C\ad)^{\ot 4}\\
    &= \frac{1}{d^2}\expect_{C\sim\mathsf{Cl}_n}((CZ_1)\ot C^{\ot 3}) Q ((Z_1C\ad)\ot (C\ad)^{\ot 3}) \\
    &= \frac{1}{d^2}\expect_{C\sim\mathsf{Cl}_n}((CZ_1C\ad)\ot\id^{\ot 3}_\mch) Q ((CZ_1C\ad)\ot\id^{\ot 3}_\mch) \,,
\end{align}
where in the last line we have used the Clifford-invariance of $Q$. 
Now, conjugation maps $Z_1$ uniformly over the $d^2-1$ nonidentity Pauli
classes (up to an irrelevant sign).  Let
$P=P_L\otimes P_{\overline L}$ be one such Pauli.  If
$P_{\overline L}=\id_{\overline{L}}$, then $\Pi_L$ accepts with probability one.  If
$P_{\overline L}\neq\id_{\overline{L}}$, choose a Pauli
$S_{\overline L}$ anticommuting with it. As one can freely absorb a factor of 
$S_{\overline L}^{\otimes4}$ in and out of  
$Q_{\overline L}$, we have
\begin{equation}
    Q_{\overline L}
 (P_{\overline L}\otimes\id^{\otimes3}_{\mch_{\overline L}})
 Q_{\overline L} = Q_{\overline L}(S_{\overline L}^{\otimes4})
 (P_{\overline L}\otimes\id^{\otimes3}_{\mch_{\overline L}})
 Q_{\overline L} = -Q_{\overline L}
 (P_{\overline L}\otimes\id^{\otimes3}_{\mch_{\overline L}})(S_{\overline L}^{\otimes4})
 Q_{\overline L} = -Q_{\overline L}
 (P_{\overline L}\otimes\id^{\otimes3}_{\mch_{\overline L}})
 Q_{\overline L} =0 \,.
\end{equation} 
So in this case,  the acceptance probability is zero. As exactly $d_L^2-1$ of the $d^2-1$ nonidentity Paulis are supported
entirely on $L$, we conclude
\begin{equation}
 \Tr\left[ \Pi_L\,\Phi_{\mu_{\mathsf{Cl}_n}}^{(4)}\left(\Ad_A(\rho_Q) \right)\right]=\frac{d_L^2-1}{d^2-1}=\frac{4^{L+1}-1}{4^n-1}.
\end{equation}
Theorem~\ref{thm:main} now immediately gives
Eq.~\eqref{eq:cliffbound}.
\end{proof}

\section{Discussion}
The Haar measure on the full unitary group is a significantly more complicated object than those of  the matchgate and Clifford groups, characterised as it is by exponentially more parameters. In the cases of the orthogonal and symplectic   groups the separations are  less dramatic, though  they are nonetheless strictly simpler objects from which exactly sampling is easier. The fact, then,  that when one settles for an approximation to the various Haar measures in the form of an approximate design the unitary case becomes (by far) the easiest is  therefore something of a counterintuitive finding. In the previous works addressing this question~\cite{west2025no,grevink2025will}, it was possible to see the increased difficulty as a result of something of a trade-off: In those works, only ensembles of unitaries consisting of elements that themselves belonged to the respective subgroups were considered, so that the true difficulty of approximating designs over those groups was obscured. Indeed, on the one hand one was trying to approximate a simpler object than the full unitary measure, but on the other hand, one was constrained to use only a restricted set of unitaries. Our finding here, that even when allowing ambient unitaries from beyond the group the approximation of such designs is exponentially harder than the unitary case, is then particularly striking. \\

It is natural to wonder whether the lower bounds of order $\Om(n)$ on the depths required to construct approximate designs in each of our considered scenarios are tight; in fact, they are. First, it is known that arbitrary Clifford gates may be synthesised in linear depth~\cite{maslov2007linear}, establishing the tightness of our bound in the Clifford case. Next, and interestingly, considering the intersection of the Cliffords with our three other example groups are   sufficient to produce   3-designs in each case~\cite{wan2022matchgate,gargiulo2026pauli}, which inherit the linear depth of the Clifford group. The matchgate case is perhaps particularly remarkable, as   Haar-random  matchgates can be synthesised in depth $3n$~\cite{braccia2025optimal}, giving a constant factor separation between depths at which approximate 2-designs are impossible, and the depth at which the Haar measure can be sampled from \textit{exactly}.\\

Throughout this work we have been implicitly using the ``defining'' representations of our considered groups, i.e., taking the matrices to represent themselves. In principle, however, we can repeat the story with any representation.
A further   implication of our results is then the constraints they can place on the depths of unitary transforms which map (equivalent) representations of a group to one another.  To be concrete, suppose that we have two representations of a group $G$, i.e. $\pi:G\to {\rm End}\,(\mch_\pi)$ and $\sg:G\to {\rm End}\,(\mch_\sg)$, along with a unitary  \textit{intertwiner} $\Lm:\mch_\pi\to  \mch_\sg$, such that for every $g\in G$ we have $(\sg(g))\circ \Lm=\Lm\circ (\pi(g))$. If we have a lower bound $d_\sg$ on the  depth required to sample from a design when working in the  representation $\sg$, then, we can immediately conclude that $d_\pi\geq d_\sg - 2\,{\rm depth}(\Lm) $, where ${\rm depth}(\Lm)$ is the circuit depth needed to implement $\Lm$ using local nearest-neighbour gates. Indeed,   this bound follows from $\sg(g) = \Lm \pi(g) \Lm\ad$, so that one way of generating random matrices according to $\sg$ is to transform to $\pi$, construct the matrices there, and transform back.\\

Beyond their status as   intriguing mathematical statements, our results have significant implications for the practicality of protocols which rely on sampling from approximate designs over the groups considered here. Indeed, many protocols for both characterisation~\cite{hashagen2018real,helsen2022matchgate,chapman2026fermionic} and learning~\cite{west2026particle,zhao2021fermionic,wan2022matchgate,west2026classical,heyraud2024unified,west2025real,low2022classical,west2024random,heyraud2024unified} rely on sampling from these groups. To take a particular example, there is considerable interest in performing classical shadows~\cite{huang2020predicting} over the matchgate group, for which one requires an (approximate) matchgate 3-design~\cite{zhao2021fermionic,west2026particle,wan2022matchgate,west2026classical,low2022classical,heyraud2024unified}. 
While the   ruling out of shallow designs over these groups under the previously considered restricted scenarios~\cite{west2025no,grevink2025will} was interesting, there remained the possibility of constructing them some other way, thus allowing for a sublinear-depth implementation of those protocols. Unfortunately,  our results   rule out such a   happy occurrence.

\section{Acknowledgements}
MW, MC and ML acknowledge support by the Laboratory Directed Research and Development (LDRD) program of LANL under project number 20260043DR, and by LANL’s ASC Beyond Moore’s
Law project. This work was also supported by the Quantum Science Center (QSC), a National Quantum Information Science Research Center of the U.S. Department of
Energy (DOE). We acknowledge useful conversations with GPT-5.6. 

\bibliography{quantum}

\end{document}